\numberwithin{equation}{section}
\numberwithin{figure}{section}
\theoremstyle{plain}
\newtheorem{thm}{\protect\theoremname}[section]
\theoremstyle{definition}
\newtheorem{defn}[thm]{\protect\definitionname}
\theoremstyle{remark}
\newtheorem{rem}[thm]{\protect\remarkname}
\theoremstyle{plain}
\newtheorem{lem}[thm]{\protect\lemmaname}
\newcommand{\ie}{\textit{i.e.}}
\newcommand{\eg}{\textit{e.g.}}
 \theoremstyle{plain}
  \newtheorem{myprop}[thm]{Proposition}
\definecolor{brown(traditional)}{rgb}{0.59, 0.29, 0.0}
\definecolor{blue(ryb)}{rgb}{0.01, 0.28, 1.0}
\definecolor{red}{rgb}{1.0, 0.0, 0.0}
\definecolor{magenta}{rgb}{1.0, 0.0, 1.0}
\definecolor{mahogany}{rgb}{0.75, 0.25, 0.0}
\definecolor{lavenderpurple}{rgb}{0.59, 0.48, 0.71}
\definecolor{olive}{rgb}{0.5, 0.5, 0.0}
\definecolor{brickred}{rgb}{0.8, 0.25, 0.33}
\definecolor{antiquefuchsia}{rgb}{0.57, 0.36, 0.51}
\definecolor{bole}{rgb}{0.47, 0.27, 0.23}
\definecolor{darkolivegreen}{rgb}{0.33, 0.42, 0.18}
\definecolor{deepjunglegreen}{rgb}{0.0, 0.29, 0.29}
\definecolor{brickred}{rgb}{0.8, 0.25, 0.33}
\definecolor{deepjunglegreen}{rgb}{0.0, 0.29, 0.29}
\definecolor{darkpastelgreen}{rgb}{0.01, 0.75, 0.24}
\definecolor{green(pigment)}{rgb}{0.0, 0.65, 0.31}
\definecolor{junglegreen}{rgb}{0.16, 0.67, 0.53}
\definecolor{officegreen}{rgb}{0.0, 0.5, 0.0}
\definecolor{seagreen}{rgb}{0.18, 0.55, 0.34}
\definecolor{teal}{rgb}{0.0, 0.5, 0.5}
\definecolor{brightgreen}{rgb}{0.4, 1.0, 0.0}
\definecolor{electricgreen}{rgb}{0.0, 1.0, 0.0}
\definecolor{malachite}{rgb}{0.04, 0.85, 0.32}
\providecommand{\definitionname}{Definition}
\providecommand{\lemmaname}{Lemma}
\providecommand{\remarkname}{Remark}
\providecommand{\theoremname}{Theorem}
\begin{document}

\global\long\def\ga{\alpha}%
\global\long\def\gb{\beta}%
\global\long\def\ggm{\gamma}%
\global\long\def\go{\omega}%
\global\long\def\gs{\sigma}%
\global\long\def\gd{\delta}%
\global\long\def\gD{\Delta}%
\global\long\def\vph{\phi}%
\global\long\def\gf{\varphi}%
\global\long\def\gk{\kappa}%
\global\long\def\gl{\lambda}%
\global\long\def\gz{\zeta}%
\global\long\def\gh{\eta}%
\global\long\def\gy{\upsilon}%
\global\long\def\gth{\theta}%
\global\long\def\gO{\Omega}%
\global\long\def\gG{\Gamma}%

\global\long\def\eps{\varepsilon}%
\global\long\def\epss#1#2{\varepsilon_{#2}^{#1}}%
\global\long\def\ep#1{\eps_{#1}}%

\global\long\def\wh#1{\widehat{#1}}%
\global\long\def\hi{\hat{\imath}}%
\global\long\def\hj{\hat{\jmath}}%
\global\long\def\hk{\hat{k}}%
\global\long\def\ol#1{\overline{#1}}%
\global\long\def\ul#1{\underline{#1}}%

\global\long\def\spec#1{\textsf{#1}}%

\global\long\def\ui{\wh{\boldsymbol{\imath}}}%
\global\long\def\uj{\wh{\boldsymbol{\jmath}}}%
\global\long\def\uk{\widehat{\boldsymbol{k}}}%

\global\long\def\uI{\widehat{\mathbf{I}}}%
\global\long\def\uJ{\widehat{\mathbf{J}}}%
\global\long\def\uK{\widehat{\mathbf{K}}}%

\global\long\def\bs#1{\boldsymbol{#1}}%
\global\long\def\vect#1{\mathbf{#1}}%
\global\long\def\bi#1{\textbf{\emph{#1}}}%

\global\long\def\uv#1{\widehat{\boldsymbol{#1}}}%
\global\long\def\cross{\times}%

\global\long\def\ddt{\frac{\dee}{\dee t}}%
\global\long\def\dbyd#1{\frac{\dee}{\dee#1}}%
\global\long\def\dby#1#2{\frac{\partial#1}{\partial#2}}%
\global\long\def\dxdt#1{\frac{\dee#1}{\dee t}}%

\global\long\def\vct#1{\bs{#1}}%

\global\long\def\partialby#1#2{\frac{\partial#1}{\partial x^{#2}}}%
\newcommandx\parder[2][usedefault, addprefix=\global, 1=]{\frac{\partial#2}{\partial#1}}%

\global\long\def\fall{,\quad\text{for all}\quad}%

\global\long\def\reals{\mathbb{R}}%

\global\long\def\rthree{\reals^{3}}%
\global\long\def\rsix{\reals^{6}}%
\global\long\def\rn{\reals^{n}}%

\global\long\def\prn{\reals^{n+}}%
\global\long\def\nrn{\reals^{n-}}%
\global\long\def\cprn{\overline{\reals}^{n+}}%
\global\long\def\cnrn{\overline{\reals}^{n-}}%
\global\long\def\rt#1{\reals^{#1}}%
\global\long\def\rtw{\reals^{12}}%

\global\long\def\les{\leqslant}%
\global\long\def\ges{\geqslant}%

\global\long\def\dee{\textrm{d}}%
\global\long\def\di{d}%
\global\long\def\dX{\dee\bp}%
\global\long\def\dx{\dee x}%
\global\long\def\D{D}%

\global\long\def\from{\colon}%
\global\long\def\tto{\longrightarrow}%
\global\long\def\lmt{\longmapsto}%
\global\long\def\lhr{\lhook\joinrel\longrightarrow}%
\global\long\def\mto{\mapsto}%

\global\long\def\abs#1{\left|#1\right|}%

\global\long\def\isom{\cong}%

\global\long\def\comp{\circ}%

\global\long\def\cl#1{\overline{#1}}%

\global\long\def\fun{\varphi}%

\global\long\def\interior{\textrm{Int}\,}%
\global\long\def\inter#1{\kern0pt  #1^{\mathrm{o}}}%
\global\long\def\interior{\textrm{Int}\,}%
\global\long\def\inter#1{\kern0pt  #1^{\mathrm{o}}}%
\global\long\def\into{\mathrm{o}}%

\global\long\def\sign{\textrm{sign}\,}%
\global\long\def\sgn#1{(-1)^{#1}}%
\global\long\def\sgnp#1{(-1)^{\abs{#1}}}%

\global\long\def\du#1{#1^{*}}%

\global\long\def\tsum{{\textstyle \sum}}%
\global\long\def\lsum{{\textstyle \sum}}%

\global\long\def\dimension{\textrm{dim}\,}%

\global\long\def\esssup{\textrm{ess}\,\sup}%

\global\long\def\ess{\textrm{{ess}}}%

\global\long\def\kernel{\mathop{\textrm{\textup{Kernel}}}}%

\global\long\def\support{\mathop{\textrm{\textup{supp}}}}%

\global\long\def\image{\mathop{\textrm{\textup{Image}}}}%

\global\long\def\diver{\mathop{\textrm{\textup{div}}}}%

\global\long\def\spanv{\textrm{span}}%

\global\long\def\tr{\mathop{\textrm{\textup{tr}}}}%
\global\long\def\tran{\mathrm{tr}}%

\global\long\def\opt{\mathrm{opt}}%

\global\long\def\resto#1{|_{#1}}%
\global\long\def\incl{\mathcal{I}}%
\global\long\def\iden{\imath}%
\global\long\def\idnt{\textrm{Id}}%
\global\long\def\rest{\rho}%
\global\long\def\extnd{e_{0}}%

\global\long\def\proj{\textrm{pr}}%

\global\long\def\L#1{L\bigl(#1\bigr)}%
\global\long\def\LS#1{L_{S}\bigl(#1\bigr)}%

\global\long\def\ino#1{\int_{#1}}%

\global\long\def\half{\frac{1}{2}}%
\global\long\def\shalf{{\scriptstyle \half}}%
\global\long\def\third{\frac{1}{3}}%

\global\long\def\empt{\varnothing}%

\global\long\def\paren#1{\left(#1\right)}%
\global\long\def\bigp#1{\bigl(#1\bigr)}%
\global\long\def\biggp#1{\biggl(#1\biggr)}%
\global\long\def\Bigp#1{\Bigl(#1\Bigr)}%

\global\long\def\braces#1{\left\{  #1\right\}  }%
\global\long\def\sqbr#1{\left[#1\right]}%
\global\long\def\anglep#1{\left\langle #1\right\rangle }%

\global\long\def\bigabs#1{\bigl|#1\bigr|}%
\global\long\def\dotp#1{#1^{\centerdot}}%
\global\long\def\pdot#1{#1^{\bs{\!\cdot}}}%

\global\long\def\eq{\sim}%
\global\long\def\quot{/\!\!\eq}%
\global\long\def\by{\!/\!}%

\global\long\def\stp{\text{\small\ensuremath{\bigodot}}}%
\global\long\def\tp{\text{\small\ensuremath{\bigotimes}}}%

\global\long\def\mi#1{#1}%
\global\long\def\mii{I}%
\global\long\def\mie#1#2{#1_{1}\cdots#1_{#2}}%

\global\long\def\smi#1{\boldsymbol{#1}}%
\global\long\def\asmi#1{#1}%
\global\long\def\ordr#1{\left\langle #1\right\rangle }%

\global\long\def\symm#1{\paren{#1}}%
\global\long\def\smtr{\mathcal{S}}%

\global\long\def\perm{p}%
\global\long\def\sperm{\mathcal{P}}%

\global\long\def\oneto{1,\dots,}%

\global\long\def\lisub#1#2#3{#1_{1}#2\dots#2#1_{#3}}%

\global\long\def\lisup#1#2#3{#1^{1}#2\dots#2#1^{#3}}%

\global\long\def\lisubb#1#2#3#4{#1_{#2}#3\dots#3#1_{#4}}%

\global\long\def\lisubbc#1#2#3#4{#1_{#2}#3\cdots#3#1_{#4}}%

\global\long\def\lisubbwout#1#2#3#4#5{#1_{#2}#3\dots#3\widehat{#1}_{#5}#3\dots#3#1_{#4}}%

\global\long\def\lisubc#1#2#3{#1_{1}#2\cdots#2#1_{#3}}%

\global\long\def\lisupc#1#2#3{#1^{1}#2\cdots#2#1^{#3}}%

\global\long\def\lisupp#1#2#3#4{#1^{#2}#3\dots#3#1^{#4}}%

\global\long\def\lisuppc#1#2#3#4{#1^{#2}#3\cdots#3#1^{#4}}%

\global\long\def\lisuppwout#1#2#3#4#5#6{#1^{#2}#3#4#3\wh{#1^{#6}}#3#4#3#1^{#5}}%

\global\long\def\lisubbwout#1#2#3#4#5#6{#1_{#2}#3#4#3\wh{#1}_{#6}#3#4#3#1_{#5}}%

\global\long\def\lisubwout#1#2#3#4{#1_{1}#2\dots#2\widehat{#1}_{#4}#2\dots#2#1_{#3}}%

\global\long\def\lisupwout#1#2#3#4{#1^{1}#2\dots#2\widehat{#1^{#4}}#2\dots#2#1^{#3}}%

\global\long\def\lisubwoutc#1#2#3#4{#1_{1}#2\cdots#2\widehat{#1}_{#4}#2\cdots#2#1_{#3}}%

\global\long\def\twp#1#2#3{\dee#1^{#2}\wedge\dee#1^{#3}}%

\global\long\def\thp#1#2#3#4{\dee#1^{#2}\wedge\dee#1^{#3}\wedge\dee#1^{#4}}%

\global\long\def\fop#1#2#3#4#5{\dee#1^{#2}\wedge\dee#1^{#3}\wedge\dee#1^{#4}\wedge\dee#1^{#5}}%

\global\long\def\idots#1{#1\dots#1}%
\global\long\def\icdots#1{#1\cdots#1}%

\global\long\def\norm#1{\|#1\|}%

\global\long\def\nonh{\heartsuit}%

\global\long\def\nhn#1{\norm{#1}^{\nonh}}%

\global\long\def\bigmid{\,\bigl|\,}%

\global\long\def\trps{^{{\scriptscriptstyle \textsf{T}}}}%

\global\long\def\testfuns{\mathcal{D}}%

\global\long\def\ntil#1{\tilde{#1}{}}%

\global\long\def\pis{y}%
\global\long\def\xo{\pis_{0}}%
\global\long\def\x{x}%

\global\long\def\pib{x}%
\global\long\def\bp{X}%
\global\long\def\ii{i}%
\global\long\def\ia{\alpha}%
\global\long\def\fp{y}%
\global\long\def\piv{v}%

\global\long\def\ib{i}%
\global\long\def\is{\alpha}%

\global\long\def\pbndo{\Gamma}%
\global\long\def\bndoo{\pbndo_{0}}%
 
\global\long\def\bndot{\pbndo_{t}}%
\global\long\def\intb{\inter{\body}}%
\global\long\def\bndb{\bdry\body}%

\global\long\def\cloo{\cl{\gO}}%

\global\long\def\nor{\mathbf{n}}%
\global\long\def\Nor{\mathbf{N}}%

\global\long\def\dA{\,\dee A}%

\global\long\def\dV{\,\dee V}%

\global\long\def\eps{\varepsilon}%

\global\long\def\tv{v}%
\global\long\def\av{u}%

\global\long\def\svs{\mathcal{W}}%
\global\long\def\vs{\mathbf{V}}%
\global\long\def\avs{\mathbf{U}}%
\global\long\def\affsp{\mathcal{A}}%
\global\long\def\man{\mathcal{M}}%
\global\long\def\odman{\mathcal{N}}%
\global\long\def\subman{\mathcal{V}}%
\global\long\def\pt{p}%

\global\long\def\vbase{e}%
\global\long\def\sbase{\mathbf{e}}%
\global\long\def\msbase{\mathfrak{e}}%
\global\long\def\vect{v}%
\global\long\def\dbase{\sbase}%

\global\long\def\chart{\varphi}%
\global\long\def\Chart{\Phi}%

\global\long\def\eucl{E}%

\global\long\def\mind{\alpha}%
\global\long\def\vb{W}%
\global\long\def\vbp{\pi}%

\global\long\def\vbt{\mathcal{E}}%
\global\long\def\fib{\vs}%
\global\long\def\vbts{W}%
\global\long\def\avb{U}%
\global\long\def\vbp{\xi}%

\global\long\def\chart{\vph}%
\global\long\def\vbchart{\Phi}%

\global\long\def\jetb#1{J^{#1}}%
\global\long\def\jet#1{j^{1}(#1)}%
\global\long\def\tjet{\tilde{\jmath}}%

\global\long\def\Jet#1{J^{1}(#1)}%

\global\long\def\jetm#1{j_{#1}}%

\global\long\def\coj{\mathfrak{d}}%

\global\long\def\alt{\mathfrak{A}}%

\global\long\def\pou{\eta}%

\global\long\def\ext{{\textstyle \bigwedge}}%
\global\long\def\forms{\Omega}%

\global\long\def\dotwedge{\dot{\mbox{\ensuremath{\wedge}}}}%

\global\long\def\vel{\theta}%

\global\long\def\Jac{\mathcal{J}}%

\global\long\def\contr{\mathbin{\raisebox{0.4pt}{\mbox{\ensuremath{\lrcorner}}}}}%
\global\long\def\fcor{\llcorner}%
\global\long\def\bcor{\lrcorner}%
\global\long\def\fcontr{\mathbin{\raisebox{0.4pt}{\mbox{\ensuremath{\llcorner}}}}}%

\global\long\def\lie{\mathcal{L}}%

\global\long\def\ssym#1#2{\ext^{#1}T^{*}#2}%

\global\long\def\sh{^{\sharp}}%

\global\long\def\nfo{\ext^{n}T^{*}\base}%
\global\long\def\dfs{\ext^{d}T^{*}\base}%
\global\long\def\dmfs{\ext^{d-1}T^{*}\base}%

\global\long\def\spc{\mathcal{S}}%
\global\long\def\sptm{\mathcal{E}}%
\global\long\def\evnt{e}%
\global\long\def\frame{\Phi}%

\global\long\def\timeman{\mathcal{T}}%
\global\long\def\zman{t}%
\global\long\def\dims{n}%
\global\long\def\m{\dims-1}%
\global\long\def\dimw{m}%

\global\long\def\wc{z}%

\global\long\def\fourv#1{\mbox{\ensuremath{\mathfrak{#1}}}}%

\global\long\def\pbform#1{\undertilde{#1}}%
\global\long\def\util#1{\raisebox{-5pt}{\ensuremath{{\scriptscriptstyle \sim}}}\!\!\!#1}%

\global\long\def\utilJ{\util J}%

\global\long\def\utilRho{\util{\rho}}%

\global\long\def\body{\mathcal{B}}%
\global\long\def\man{\mathcal{M}}%
\global\long\def\var{\mathcal{V}}%
\global\long\def\base{\mathcal{X}}%
\global\long\def\fb{\mathcal{Y}}%
\global\long\def\srfc{\mathcal{Z}}%
\global\long\def\dimb{n}%
\global\long\def\dimf{m}%
\global\long\def\afb{\mathcal{Z}}%

\global\long\def\bdry{\partial}%

\global\long\def\gO{\varOmega}%

\global\long\def\reg{\mathcal{R}}%
\global\long\def\bdrr{\bdry\reg}%

\global\long\def\bdom{\bdry\gO}%

\global\long\def\bndo{\partial\gO}%

\global\long\def\tpr{\vartheta}%

\global\long\def\mot{M}%
\global\long\def\vf{w}%
\global\long\def\const{h}%

\global\long\def\avf{u}%

\global\long\def\stn{\varepsilon}%
\global\long\def\djet{\chi}%

\global\long\def\jvf{\eps}%

\global\long\def\rig{r}%

\global\long\def\rigs{\mathcal{R}}%

\global\long\def\qrigs{\!/\!\rigs}%

\global\long\def\qd{\!/\,\!\kernel\diffop}%

\global\long\def\dis{\chi}%
\global\long\def\conf{\kappa}%
\global\long\def\invc{\hat{\conf}^{-1}}%
\global\long\def\dinvc{\hat{\conf}^{-1*}}%
\global\long\def\csp{\mathcal{Q}}%

\global\long\def\embds{\textrm{Emb}}%

\global\long\def\lc{A}%

\global\long\def\lv{\dot{A}}%
\global\long\def\alv{\dot{B}}%

\global\long\def\j{\mathop{\mathrm{j}}}%
\global\long\def\mapp{M}%
\global\long\def\J{J}%
\global\long\def\jex{\mathop{}\!\mathrm{j}}%

\global\long\def\fc{F}%
\global\long\def\load{f}%
\global\long\def\afc{g}%

\global\long\def\bfc{\mathbf{b}}%
\global\long\def\bfcc{b}%

\global\long\def\sfc{\mathbf{t}}%
\global\long\def\sfcc{t}%

\global\long\def\stm{\varsigma}%
\global\long\def\std{S}%
\global\long\def\tst{\sigma}%
\global\long\def\tstd{s}%
\global\long\def\st{\sigma}%
\global\long\def\vst{\varsigma}%
\global\long\def\vstd{S}%
\global\long\def\tstm{\sigma}%
\global\long\def\vstm{\varsigma}%

\global\long\def\stp{S_{P}}%
\global\long\def\slf{R}%

\global\long\def\crel{\Phi}%

\global\long\def\stmat{\tau}%

\global\long\def\gdiv{\bdry\textrm{iv\,}}%
\global\long\def\extjet{\mathfrak{d}}%

\global\long\def\smc#1{\mathfrak{#1}}%

\global\long\def\nhs{P}%
\global\long\def\nhsa{P}%
\global\long\def\nhsb{\underline{P}}%

\global\long\def\soc{Z}%

\global\long\def\sts{\varSigma}%
\global\long\def\spstd{\mathfrak{S}}%
\global\long\def\sptst{\mathfrak{T}}%
\global\long\def\spnhs{\mathcal{P}}%
\global\long\def\Ljj{\L{J^{1}(J^{k-1}\vb),\ext^{n}T^{*}\base}}%

\global\long\def\spsb{\text{\Large\ensuremath{\Delta}}}%

\global\long\def\ened{\mathfrak{w}}%
\global\long\def\energy{\mathfrak{W}}%

\global\long\def\ebdfc{T}%
\global\long\def\optimum{\st^{\textrm{opt}}}%
\global\long\def\scf{K}%

\global\long\def\grp{G}%
\global\long\def\gact{A}%
\global\long\def\gid{e}%
\global\long\def\gel{\ggm}%

\global\long\def\ael{\upsilon}%
\global\long\def\lal{\mathfrak{g}}%

\global\long\def\prop{P}%
\global\long\def\expr{P}%

\global\long\def\aprop{Q}%

\global\long\def\flux{\omega}%
\global\long\def\aflux{\psi}%

\global\long\def\fform{\tau}%

\global\long\def\dimn{n}%

\global\long\def\sdim{{\dimn-1}}%

\global\long\def\fdens{\phi}%

\global\long\def\pform{\varsigma}%
\global\long\def\vform{\beta}%
\global\long\def\sform{\tau}%
\global\long\def\flow{J}%
\global\long\def\n{\m}%
\global\long\def\cmap{\mathfrak{t}}%
\global\long\def\vcmap{\varSigma}%

\global\long\def\mvec{\mathfrak{v}}%
\global\long\def\mveco#1{\mathfrak{#1}}%
\global\long\def\mv#1{\mathfrak{#1}}%
\global\long\def\smbase{\mathfrak{e}}%
\global\long\def\spx{\simp}%
\global\long\def\il{l}%
\global\long\def\awe{\frown}%

\global\long\def\hp{H}%
\global\long\def\ohp{h}%

\global\long\def\hps{G_{\dims-1}(T\spc)}%
\global\long\def\ohps{G_{\dims-1}^{\perp}(T\spc)}%

\global\long\def\hyper{\mathcal{S}}%

\global\long\def\hpsx{G_{\dims-1}(\tspc)}%
\global\long\def\ohpsx{G_{\dims-1}^{\perp}(\tspc)}%

\global\long\def\fbun{F}%

\global\long\def\flowm{\Phi}%

\global\long\def\tgb{T\spc}%
\global\long\def\ctgb{T^{*}\spc}%
\global\long\def\tspc{T_{\pis}\spc}%
\global\long\def\dspc{T_{\pis}^{*}\spc}%

\global\long\def\fflow{\fourv J}%
\global\long\def\fvform{\mathfrak{b}}%
\global\long\def\fsform{\mathfrak{t}}%
\global\long\def\fpform{\mathfrak{s}}%
\global\long\def\lfc{\mathfrak{F}}%

\global\long\def\maxw{\mathfrak{g}}%
\global\long\def\frdy{\mathfrak{f}}%
\global\long\def\ptnl{\psi}%
\global\long\def\tptn{\Psi}%
\global\long\def\vptn{\mathfrak{a}}%
\global\long\def\mtst{\tstd_{M}}%
\global\long\def\mvst{\vstd_{M}}%

\global\long\def\sobp#1#2{W_{#2}^{#1}}%

\global\long\def\inner#1#2{\left\langle #1,#2\right\rangle }%

\global\long\def\fields{\sobp pk(\vb)}%

\global\long\def\bodyfields{\sobp p{k_{\partial}}(\vb)}%

\global\long\def\forces{\sobp pk(\vb)^{*}}%

\global\long\def\bfields{\sobp p{k_{\partial}}(\vb\resto{\bndo})}%

\global\long\def\loadp{(\sfc,\bfc)}%

\global\long\def\strains{\lp p(\jetb k(\vb))}%

\global\long\def\stresses{\lp{p'}(\jetb k(\vb)^{*})}%

\global\long\def\diffop{D}%

\global\long\def\strainm{E}%

\global\long\def\incomps{\vbts_{\yieldf}}%

\global\long\def\devs{L^{p'}(\eta_{1}^{*})}%

\global\long\def\incompsns{L^{p}(\eta_{1})}%

\global\long\def\testf{\mathcal{D}}%
\global\long\def\dists{\mathcal{D}'}%

\global\long\def\codiv{\boldsymbol{\partial}}%

\global\long\def\currof#1{\tilde{#1}}%

\global\long\def\chn{c}%
\global\long\def\chnsp{\mathbf{C}}%

\global\long\def\current{T}%
\global\long\def\curr{R}%

\global\long\def\curd{S}%
\global\long\def\curwd#1{\wh{#1}}%
\global\long\def\curnd#1{\wh{#1}}%

\global\long\def\contrf{{\scriptstyle \smallfrown}}%

\global\long\def\prodf{{\scriptstyle \smallsmile}}%

\global\long\def\form{\omega}%

\global\long\def\dens{\rho}%

\global\long\def\simp{s}%
\global\long\def\ssimp{\Delta}%
\global\long\def\cpx{K}%

\global\long\def\cell{C}%

\global\long\def\chain{B}%

\global\long\def\ach{A}%

\global\long\def\coch{X}%

\global\long\def\scale{s}%

\global\long\def\fnorm#1{\norm{#1}^{\flat}}%

\global\long\def\chains{\mathcal{A}}%

\global\long\def\ivs{\boldsymbol{U}}%

\global\long\def\mvs{\boldsymbol{V}}%

\global\long\def\cvs{\boldsymbol{W}}%

\global\long\def\ndual#1{#1'}%

\global\long\def\nd{'}%

\global\long\def\cee#1{C^{#1}}%

\global\long\def\lone{\{L^{1}\}}%

\global\long\def\linf{L^{\infty}}%

\global\long\def\lp#1{L^{#1}}%

\global\long\def\ofbdo{(\bndo)}%

\global\long\def\ofclo{(\cloo)}%

\global\long\def\vono{(\gO,\rthree)}%

\global\long\def\lomu{\{L^{1,\mu}\}}%
\global\long\def\limu{L^{\infty,\mu}}%
\global\long\def\limub{\limu(\body,\rthree)}%
\global\long\def\lomub{\lomu(\body,\rthree)}%

\global\long\def\vonbdo{(\bndo,\rthree)}%
\global\long\def\vonbdoo{(\bndoo,\rthree)}%
\global\long\def\vonbdot{(\bndot,\rthree)}%

\global\long\def\vonclo{(\cl{\gO},\rthree)}%

\global\long\def\strono{(\gO,\reals^{6})}%

\global\long\def\sob{\{W_{1}^{1}\}}%

\global\long\def\sobb{\sob(\gO,\rthree)}%

\global\long\def\lob{\lone(\gO,\rthree)}%

\global\long\def\lib{\linf(\gO,\reals^{12})}%

\global\long\def\ofO{(\gO)}%

\global\long\def\oneo{{1,\gO}}%
\global\long\def\onebdo{{1,\bndo}}%
\global\long\def\info{{\infty,\gO}}%

\global\long\def\infclo{{\infty,\cloo}}%

\global\long\def\infbdo{{\infty,\bndo}}%
\global\long\def\lobdry{\lone(\bdry\gO,\rthree)}%

\global\long\def\ld{LD}%

\global\long\def\ldo{\ld\ofO}%
\global\long\def\ldoo{\ldo_{0}}%

\global\long\def\trace{\gamma}%
\global\long\def\dtrace{\delta}%
\global\long\def\gtrace{\beta}%

\global\long\def\pr{\proj_{\rigs}}%

\global\long\def\pq{\proj}%

\global\long\def\qr{\,/\,\reals}%

\global\long\def\aro{S_{1}}%
\global\long\def\art{S_{2}}%

\global\long\def\mo{m_{1}}%
\global\long\def\mt{m_{2}}%

\global\long\def\ebdfc{T}%

\global\long\def\mini{\Omega}%
\global\long\def\optimum{s^{\mathrm{opt}}}%
\global\long\def\scf{K}%
\global\long\def\opsf{\st^{\mathrm{opt}}}%
\global\long\def\doptimum{s^{\opt,{\scriptscriptstyle D}}}%
\global\long\def\loptimum{s^{\opt,{\scriptscriptstyle \mathcal{M}}}}%

\global\long\def\fsubs{M}%

\global\long\def\yieldc{B}%

\global\long\def\yieldf{Y}%

\global\long\def\trpr{\pi_{P}}%

\global\long\def\devpr{\pi_{\devsp}}%

\global\long\def\prsp{P}%

\global\long\def\devsp{D}%

\global\long\def\ynorm#1{\|#1\|_{\yieldf}}%

\global\long\def\colls{\Psi}%

\global\long\def\aro{S_{1}}%
\global\long\def\art{S_{2}}%

\global\long\def\mo{m_{1}}%
\global\long\def\mt{m_{2}}%

\global\long\def\trps{^{\mathsf{T}}}%

\global\long\def\hb{^{\mathrm{hb}}}%

\global\long\def\yieldst{s_{Y}}%

\global\long\def\yieldc{B}%

\global\long\def\lcap{C}%

\global\long\def\yieldf{Y}%

\global\long\def\sphpr{\pi_{P}}%

\global\long\def\devpr{\pi_{\devsp}}%

\global\long\def\prsp{P}%

\global\long\def\devsp{D}%

\global\long\def\ynorm#1{\|#1\|_{\yieldf}}%

\global\long\def\colls{\Psi}%

\global\long\def\cone{Q}%
\global\long\def\fpr{\Pi}%
\global\long\def\fprd{\fpr_{\devsp}}%
\global\long\def\fprp{\fpr_{\prsp}}%
\global\long\def\find{I_{\devsp}}%
\global\long\def\finp{I_{\prsp}}%
\global\long\def\fnorm#1{\norm{#1}_{\devsp}}%

\global\long\def\rig{r}%
\global\long\def\rigs{\mathcal{R}}%
\global\long\def\qrigs{\!/\!\rigs}%
\global\long\def\anv{\omega}%
\global\long\def\I{I}%
\global\long\def\mone{M_{1}}%

\global\long\def\bd{BD}%

\global\long\def\po{\proj_{0}}%
\global\long\def\normp#1{\norm{#1}'_{\ld}}%

\global\long\def\ssx{S}%

\global\long\def\smap{s}%

\global\long\def\smat{\chi}%

\global\long\def\sx{e}%

\global\long\def\snode{P}%
\global\long\def\newmacroname{}%

\global\long\def\elem{e}%

\global\long\def\nel{L}%

\global\long\def\el{l}%

\global\long\def\gr{g}%
\global\long\def\ngr{G}%

\global\long\def\eldof{\alpha}%

\global\long\def\glbs{\psi}%

\global\long\def\ipln{\phi}%

\global\long\def\ndof{D}%

\global\long\def\dof{d}%

\global\long\def\nldof{N}%

\global\long\def\ldof{n}%

\global\long\def\lvf{\chi}%

\global\long\def\amat{A}%
\global\long\def\bmat{B}%

\global\long\def\subsp{\mathcal{M}}%
\global\long\def\zerofn{Z}%

\global\long\def\snomat{E}%

\global\long\def\femat{E}%

\global\long\def\tmat{T}%

\global\long\def\fvec{f}%

\global\long\def\snsp{\mathcal{S}}%

\global\long\def\slnsp{\Phi}%
\global\long\def\dslnsp{\Phi^{{\scriptscriptstyle D}}}%

\global\long\def\ro{r_{1}}%

\global\long\def\rtwo{r_{2}}%

\global\long\def\rth{r_{3}}%

\global\long\def\fmax{M}%

\global\long\def\dform{\psi}%

\global\long\def\srfc{\mathcal{S}}%

\global\long\def\semib{\mathrm{SB}}%

\global\long\def\tm#1{\overrightarrow{#1}}%
\global\long\def\tmm#1{\underrightarrow{\overrightarrow{#1}}}%

\global\long\def\itm#1{\overleftarrow{#1}}%
\global\long\def\itmm#1{\underleftarrow{\overleftarrow{#1}}}%

\global\long\def\ptrac{\mathcal{P}}%

\global\long\def\nh#1{\hat{#1}}%
\global\long\def\nj{\hat{\jmath}}%
\global\long\def\nJ{\hat{J}}%
\global\long\def\rin#1{\mathfrak{#1}}%
\global\long\def\npi{\hat{\pi}}%
\global\long\def\rp{\rin p}%
\global\long\def\rq{\rin q}%
\global\long\def\rr{\rin r}%

\global\long\def\xty{(\base,\fb)}%
\global\long\def\xts{(\base,\spc)}%
\global\long\def\r{r}%
\global\long\def\ntm{(\reals^{n},\reals^{m})}%

\global\long\def\mtn{e}%
\global\long\def\sppp{\lambda}%

\global\long\def\mtsp{\mathscr{E}}%

\global\long\def\disp{g}%
\global\long\def\diffs{G}%

\global\long\def\bv{BV}%

\title[Incompatible-Compatible Decomposition]{Continuum Kinematics with Incompatible-Compatible Decomposition}
\author{Vladimir Goldshtein$\vphantom{N^{2}}^{1}$, Paolo Maria Mariano$\vphantom{N^{2}}^{2}$,
Domenico Mucci$\vphantom{N^{2}}^{3}$,\\
 and Reuven Segev$\vphantom{N^{2}}^{4}$}
\address{}
\keywords{Compatibility; Incompatibility; Elastic-plastic decomposition, Kinematics;
Continuum mechanics; Differentiable manifolds; Vector bundle morphisms.}
\begin{abstract}
We present a framework for the kinematics of a material body undergoing
anelastic deformation. For such processes, the material structure
of the body, as reflected by the geometric structure given to the
set of body points, changes. The setting we propose may be relevant
to phenomena such as plasticity, fracture, discontinuities and non-injectivity
of the deformations. In this framework, we construct an unambiguous
decomposition into incompatible and compatible factors which includes
the standard elastic-plastic decomposition in plasticity.
\end{abstract}

\date{\today\\[2mm]
$^1$ Department of Mathematics, Ben-Gurion University of the Negev, Israel. Email: vladimir@bgu.ac.il\\
$^2$ DICEA - University of Florence via Santa Marta 3, I-50139 Firenze, Italy. Email: paolomaria.mariano@unifi.it\\
$^3$ DSMFI - UNiversità di Parma Parco Area delle Scienze 53/A, I-43134 Parma, Italy. Email: domenico.mucci@unipr.it\\
$^4$ Department of Mechanical Engineering, Ben-Gurion University of the Negev, Israel. Email: rsegev@post.bgu.ac.il}
\subjclass[2000]{70A05; 74A05.}

\maketitle

\section{Introduction}

The elastic-plastic decomposition of the deformation gradient, $F$,
into an ``elastic'' factor, $F^{e}$, and a ``plastic'' factor, $F^{p}$,
as $F=F^{e}F^{p}$, was introduced in 1960 by Kröner \cite{Kr60}
and in 1967 by Lee \& Liu \cite{Lee1967,L69}, and has been used and
studied extensively since then. For a comprehensive review of the
subsequent work see \cite{SH98}, and for more recent work see for
example \cite{AA20}, \cite{Cas17}, \cite{CD20}, \cite{GGY21},
\cite{M13}, \cite{Miehe14}, \cite{Mielke03}, \cite{YS20}, \cite{YS23}.
The plastic factor is viewed as the tensor field needed in order to
release the residual stresses in the reference unloaded configuration
of the body. The incompatibility reflects the macroscopic description
of the existence of defects in the material. From another point of
view, \eg,  \cite{Miehe1998}, \cite{YS20}, \cite{YS23}, it is
impossible to embed isometrically the body with the stress-free metric
tensor in a $3$-dimensional Euclidean space. Under such interpretations,
the elastic factor, $F^{e}$, describes the incompatible packing of
the stress free body elements to restore compatibility to $F=F^{e}F^{p}$.

Another view on the elastic-plastic decomposition is proposed by C.
Reina and S. Conti \cite{RC14,RSC16,R_et_al_18} where starting from
a perfect lattice, $F^{p}$ corresponds to a change of material structure
of the lattice\textemdash a change in the topology\textemdash while
$F^{e}$ corresponds to the placement of the defected structure in
space.

The following exemplifies some approaches to motivate the elastic-plastic
decomposition.
\begin{itemize}
\item By looking at lattices and considering a notion of defectiveness defined
referring to invariant peculiar features with respect to the action
of diffeomorphisms, G. Parry arrived at a multiplicative decomposition
that involves two factors of the type $F^{p}$, one preceding $F^{e}$,
the other following it \cite{Pa04}. \smallskip{}
\item In crystals, slips may occur along special planes and are a source
of unrecoverable strain determined by the slip of dislocations. Across
such planes deformations suffer jumps of finite amplitude. A way to
model the circumstance is selecting deformations to be special maps
of bounded variations ($SBV$-maps). Such maps admit a distributional
derivative that is a measure with an additive decomposition into a
bulk part, which is absolutely continuous with respect to the Lebesgue
volume measure, and a singular component concentrated over a rectifiable
set with $m-1$ Hausdorff's measure, where $m$ is the dimension of
the domain. The multiplicative decomposition $F=F^{e}F^{p}$ emerges
naturally, as shown by C. Reina and S. Conti \cite{RC14} (see also
\cite{RSC16,R_et_al_18} and \cite{MM22}; the latter reference accounts
for possible volumetric plastic changes in the $SBV$ setting). In
this view, $F^{p}$ is a measure, while $F^{e}$ a gradient, taken
with positive determinant. In \cite{R_et_al_18} the plastic deformation
is shown to follow from a coarse-graining procedure from the lattice
mesoscopic description.
\end{itemize}
Here, we propose a framework which shares similarities with these
last two approaches. Like \cite{R_et_al_18}, we view the plastic
factor as assignment of topological structure to the body. Similarly
to \cite{Pa04}, we take material structure to be invariant under
a subgroup of the group of diffeomorphisms.

\medskip{}

When we refer to continuum mechanics, we commonly say that it is the
qualitative and quantitative description of the way tangible bodies
react under external actions. The definition requires clarification
of the essential nature of what we call a \emph{body}: this is a conceptual
choice, we need to make\textemdash and we do this even unconsciously\textemdash in
building up mathematical models of natural phenomena.

In basic treatises on continuum mechanics\textemdash mainly those
emerging from the work of C.~A.~Truesdell's school\textemdash a
body is taken to be a set of not otherwise specified \emph{material
elements}, presumed to be endowed with the structure of a finite-dimensional
manifold \cite{Noll1959} \cite{T77}, \cite{TT60}, \cite{TN65},
\cite{N73}, \cite{Sil}. In particular, in \cite{Noll1959}, the
manifold structure of body is manifested by its configurations in
the $3$-dimensional Euclidean space.

This setting may be extended to the situation where the physical space
is modeled as a general $n$-dimensional manifold, $\spc$. Such a
generalization may be motivated, for example, by considering small
scale interactions, or microstructure. In this case, configurations
will be valued in a fiber bundle over a Euclidean space \cite{C89},
\cite{Seg94}, \cite{M02}, \cite{M16}.

Anelasticity is associated with changes of material structure\textemdash the
topological or geometric structure of the material body. Thus, one
has to make a clear distinction between a body, which has a certain
manifold structure, and the collection of points that the body comprises.
To identify the object, the material structure of which may change
in an anelastic process, we use the term \emph{protobody}. The various
material structures that a protobody may attain in anelastic processes
are referred to as \emph{embodiments.} Each embodiment of a protobody
should be a body of continuum mechanics. (See \cite{Seg96} and \cite{SegEp96},
where analogous notions are presented for theories of growing bodies.)

The configuration space, $\csp$, of a protobody in space, should
contain all the configurations in space at all possible embodiments
of the body. To each configuration $\conf$ of the protobody in space,
there corresponds an embodiment $\mtn$ of the protobody. However,
it is expected that for each embodiment there will be a subset of
configurations of the protobody.

Thus, we say that two configurations, $\conf_{1}$ and $\conf_{2}$,
of the protobody correspond to the same embodiment if there is a diffeomorphism,
$\disp_{21}$, of the space manifold such that $\conf_{2}=\disp_{21}\comp\conf_{1}$.
This induces an equivalence relation on $\csp$, for which an embodiment
is an equivalence class, and the embodiment space, $\mtsp$, is the
quotient set.

Next, we show that an embodiment, $\mtn$, may be represented as a
topological space, $\body_{\mtn}$, the elements of which are the
body points associated with that embodiment. It is noted that we do
not restrict configurations of the protobody to be injective. As a
result, the topological spaces $\body_{\mtn_{1}}$ and $\body_{\mtn_{2}}$,
for two distinct embeddings need not comprise the same material points.
Finally, each configuration $\conf:\body\to\spc$ is factored in the
form $\conf=\conf_{e}\comp\conf_{a\mtn}$, where $\conf_{a\mtn}:\body\to\body_{\mtn}$
and $\conf_{\mtn}:\body_{\mtn}\to\spc$, the analog of the elastic-plastic
decomposition. There is no ambiguity in the decomposition.

This general framework makes it possible to represent discontinuous
and non-injective configurations of the protobody in space, modeling
phenomena such as fracture and destruction of material points.

To consider phenomena such as plasticity for which the the elastic-plastic
decomposition applies to the deformation gradient, we have to be more
specific. Thus, we substitute for the protobody the tangent bundle
$T\body$ of a manifold $\body$, representing the perfect crystallographic
structure of the body. A configuration is represented by a vector
bundle morphism $\conf:T\body\to T\spc$. Incompatibility occurs when
$\conf$ is not the tangent mapping of the base map $\ul{\conf}:\body\to\spc$.
We say that two configurations, $\conf_{1}$ and $\conf_{2}$, correspond
to the same embodiment when there is a diffeomorphism, $\disp_{21}$,
of space such that $\conf_{2}=T\disp_{21}\comp\conf_{1}$. We show
that an embodiment is represented by a vector bundle, representing
the ``dislocated'' material structure, and the elastic-plastic decomposition
of vector bundle configurations of a protobody follows.

Section \ref{sec:The-Basic-Framework} below outlines the general
framework we propose for the kinematics of elastic-anelastic processes.
Section \ref{sec:Groupoids} describes some of the notions of the
general framework in terms of groupoids. This section may be skipped
without interrupting the rest of the text. Section \ref{sec:Infinitesimal-Incompatibility}
considers the case where configurations are vector bundle morphisms
defined on the tangent bundle of a manifold. As mentioned above, the
tangent bundle represents a solid body together with its microstructure.
Section \ref{sec:Dislocated} specializes the foregoing one to the
case where the base mapping of the vector bundle morphisms representing
the configurations, are embeddings. This situation is analogous, in
the geometry of differentiable manifolds, to the classical elastic-plastic
decomposition described above. Finally, in Section \ref{sec:Dislocated-Quasicrystals},
we make some comments as to the relevance of the proposed framework
to quasicrystals.

\section{\label{sec:The-Basic-Framework}The Basic Framework}

\subsection{Basic definitions}

Let $\body$ be a set, which we view as a collection of material points,
and refer to it as a \emph{protobody}. We do not assume at this stage
that $\body$ has any particular structure. As a standard example,
the protobody may be represented by a bounded open subset of $\rthree$.

The \emph{physical space} is modeled by an $n$-dimensional oriented
differentiable manifold $\spc$. In traditional formulation of continuum
mechanics $\spc$ is modeled as a $3$-dimensional Euclidean space.

The \emph{configuration space}, $\csp$, of the protobody is assumed
to be a given class of mappings of the protobody into the space manifold.
A generic element of $\csp$ is denoted as $\conf:\body\to\spc$.
For example, if $\body$ is a bounded and connected open subset of
$\rthree$ and $\spc=\rthree$, one may consider the case where $\csp=\bv(\body,\rthree)$,
or $\csp=SBV^{p}(\body,\rthree)$, with appropriate $p$, when discontinuities
of the deformation distributional derivative (a measure, indeed) do
not include a Cantor set and the absolutely continuous part with respect
to the Lebesgue measure is endowed with $L^{p}$ density.

In the rest of the text, we refer to \emph{bi-Lipschitz, oriented
diffeomorphisms} simply as \emph{diffeomorphisms}. On a differentiable
manifold, bi-Lipschitz mappings may be defined using a Riemannian
metric. The class of bi-Lipschitz mappings is invariant under the
particular choice of a Riemannian metric.
\begin{defn}
Let $\diffs$ be a subgroup of the group of diffeomorphisms of $\spc$.
We say that $\conf_{1},\conf_{2}\in\csp$ are \emph{compatible} if
there is a diffeomorphism $\disp_{21}\in\diffs$ such that 
\begin{equation}
\conf_{2}=\disp_{21}\comp\conf_{1}.
\end{equation}
In such a case, we refer to $\disp_{21}$ as a (compatible) \emph{displacement
}and we write $\conf_{1}\sim\conf_{2}$.
\end{defn}

\begin{rem}
As a possible generalization of this definition one may consider a
group of bijective mappings $\spc\to\spc$ that are not necessarily
smooth. This may lead to a relaxed definition of the compatible (elastic)
factor of the decomposition. For example, one may consider a subgroup
of the group of bi-Lipschitz mappings on $\spc$.
\end{rem}

Evidently, compatibility is an equivalence relation, which justifies
the notation we adopt.
\begin{defn}
The quotient space, 
\begin{equation}
\mtsp:=\csp/\sim
\end{equation}
will be referred to as the \emph{space of material structures} or
the \emph{embodiment space}. An element $\mtn\in\mtsp$ represents
a \emph{material structure }or an \emph{embodiment.}
\end{defn}

Thus, we have a natural projection
\begin{equation}
\pi_{\mtsp}:\csp\tto\mtsp,\qquad\conf\lmt[\conf],
\end{equation}
where $[\conf]$ denotes the equivalence class of $\conf$.

\subsection{The structure induced by an embodiment}

Any embodiment induces a topological space. In fact, let $\mtn\in\mtsp$
be an embodiment, and define
\begin{equation}
A_{\mtn}:=\coprod_{\conf\in\mtn}\image\conf.\label{eq:disjoint_union}
\end{equation}
An element $a\in A_{\mtn}$ is represented by $(y,\conf)$ where $y\in\image\conf\subset\spc$,
$\conf\in\mtn$.

Consider the following relation on $A_{\mtn}$. We say that 
\begin{equation}
a_{1}=(y_{1},\conf_{1})\sim_{\mtn}a_{2}=(y_{2},\conf_{2})\qquad\text{if}\qquad y_{2}=\disp_{21}(y_{1})\label{eq:Eq-Rel-e}
\end{equation}
for $\disp_{21}\in\diffs$ satisfying $\conf_{2}=\disp_{21}\comp\conf_{1}$.
By the definition of $\mtsp$, such a diffeomorphism exists. Evidently,
$\sim_{\mtn}$ is an equivalence relation. The equivalence class of
$a\in A_{\mtn}$ will be denoted as $[a]_{\mtn}$. The quotient space
$A_{\mtn}/\sim_{\mtn}$ will be denoted by $\body_{\mtn}$, so that
we have a natural projection
\begin{equation}
\pi_{\mtn}:A_{\mtn}\tto\body_{\mtn}=A_{\mtn}/\sim_{\mtn}.\label{eq:B_e}
\end{equation}
An element $x\in\body_{\mtn}$ is interpreted as a \emph{body point}
contained in the embodiment $\mtn$ of the protobody. The set $\body_{\mtn}$
is interpreted as the set of body points contained in the embodiment
$\mtn$. We may refer to $\body_{\mtn}$ as the \emph{body structure}
induced by the embodiment $\mtn$.

Let $\mtn\in\mtsp$ be an embodiment, and let $\conf\in\mtn$. We
have a natural mapping 
\begin{equation}
\pi_{\mtn\conf}:\image\conf\tto\body_{\mtn},\qquad y\lmt[(y,\conf)]_{\mtn}.\label{eq:pi_ek}
\end{equation}
The mapping $\pi_{\mtn\conf}$ is clearly a bijection. The body point
$x=\pi_{\mtn\conf}(y)$ occupies the location $y\in\spc$ at the configuration
$\conf$.

Let $\conf\in\mtn$ be a configuration. Then, $\image\conf$ has the
subspace topology it inherits from the manifold $\spc$. If $\conf_{1},\conf_{2}\in\mtn$
so that $\conf_{2}=\disp_{21}\comp\conf_{1}$, then, $\disp_{21}\resto{\image\conf_{1}}:\image\conf_{1}\to\image\conf_{2}$
is a homeomorphism. This induces a topology on $\body_{\mtn}$ by
defining a subset $U\subset\body_{\mtn}$ to be open if for some $\conf\in\mtn$,
and an open subset $U_{\conf}\subset\image\conf$, 
\begin{equation}
U=\pi_{\mtn\conf}(U_{\conf}).
\end{equation}
The topology is well defined, and is independent of the choice of
$\conf\in\mtn$. Moreover, with this topology, $\pi_{\mtn\conf}:\image\conf\to\body_{\mtn}$
is a homeomorphism for each $\conf\in e$,

In all practical cases, $\image\conf$ will be a topological submanifold
of $\spc$. If $\image\conf$ is an oriented differentiable submanifold
of $\spc$ for some $\conf\in\mtn$, this applies to all other $\conf'\in\mtn$.
In this case, for $\conf_{1},\conf_{2}\in\mtn,$ $\disp_{21}\resto{\image\conf_{1}}:\image\conf_{1}\to\image\conf_{2}$
is a diffeomorphism. A procedure analogous to the one above induces
an oriented manifold structure on $\body_{\mtn}$ for which $\pi_{\mtn\conf}$
is a diffeomorphism.

\subsection{The incompatible-compatible decomposition}

Let $\conf:\body\to\spc$ be a configuration, $\incl_{\conf}:\image\conf\to\spc$
the natural inclusion, and $\mtn=\pi(\conf)\in\mtsp$ the induced
embodiment. Since $\pi_{\mtn\conf}:\image\conf\to\body_{\mtn}$ is
a homeomorphism, the same applies to $\pi_{\mtn\conf}^{-1}$ and we
can define
\begin{equation}
\conf_{\mtn}:=\incl_{\conf}\comp\pi_{\mtn\conf}^{-1}:\body_{\mtn}\tto\spc.
\end{equation}
The mapping $\conf_{\mtn}$ is interpreted as the \emph{compatible
factor of the configuration} $\conf$. It is the analog of the ``elastic''
factor of the ``plastic-elastic'' decomposition. Clearly, the compatible
factor of the configuration is a continuous injection into $\spc$.
In case $\image\conf$ is an oriented submanifold of $\spc$, and
we use the induced differentiable structure on $\body_{e}$, $\conf_{\mtn}$
is an embedding.

For the same variables as above, consider the mapping 
\begin{equation}
\conf_{a\mtn}:=\pi_{\mtn\conf}\comp\conf:\body\tto\body_{\mtn}.
\end{equation}
Then,
\begin{equation}
\conf=\conf_{\mtn}\comp\conf_{a\mtn},
\end{equation}
which is the incompatible-compatible decomposition (see the diagram
below).
\begin{equation}
\xymatrix{ &  & \image\conf\subset\spc\ar[d]_{\pi_{\mtn\conf}}\\
\body\ar[rr]_{\conf_{a\mtn}}\ar[rru]^{\conf} &  & \body_{\mtn}\ar@/_{1pc}/[u]_{\conf_{\mtn}}.
}
\end{equation}

The relevance of the decomposition follows from the following property.
\begin{lem}
Let $\conf\in Q$ be a configuration. Then, $\conf_{a\mtn}$ depends
only on $\mtn=\pi_{\mtsp}(\conf)\in\mtsp$.
\end{lem}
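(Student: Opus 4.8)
The plan is to unwind what ``depends only on $\mtn$'' means. Since $\mtn=\pi_{\mtsp}(\conf)$ is precisely the compatibility class of $\conf$, the assertion is equivalent to the following: whenever $\conf_{1}\sim\conf_{2}$ both lie in the embodiment $\mtn$, the two induced maps $(\conf_{1})_{a\mtn},(\conf_{2})_{a\mtn}\colon\body\tto\body_{\mtn}$ coincide. Observe first that these maps do have a common target, since $\body_{\mtn}$ was constructed from the class $\mtn$ alone; so the statement is genuinely about equality of maps, not merely of codomains.

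First I would fix a displacement $\disp_{21}\in\diffs$ with $\conf_{2}=\disp_{21}\comp\conf_{1}$, which exists because $\conf_{1},\conf_{2}$ belong to the common class $\mtn$. Next I would evaluate both maps at an arbitrary point $b\in\body$. Setting $y_{1}:=\conf_{1}(b)\in\image\conf_{1}$ and $y_{2}:=\conf_{2}(b)\in\image\conf_{2}$, the definitions $\conf_{a\mtn}=\pi_{\mtn\conf}\comp\conf$ and $\pi_{\mtn\conf}(y)=[(y,\conf)]_{\mtn}$ give
\[
(\conf_{1})_{a\mtn}(b)=[(y_{1},\conf_{1})]_{\mtn},\qquad(\conf_{2})_{a\mtn}(b)=[(y_{2},\conf_{2})]_{\mtn}.
\]
The heart of the argument is then to recognise that these two classes in $\body_{\mtn}$ are equal. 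Since $\conf_{2}=\disp_{21}\comp\conf_{1}$, we have $y_{2}=\conf_{2}(b)=\disp_{21}(\conf_{1}(b))=\disp_{21}(y_{1})$, which is exactly the defining condition of the relation $\sim_{\mtn}$ in \eqref{eq:Eq-Rel-e}, witnessed by the very displacement $\disp_{21}$. Hence $(y_{1},\conf_{1})\sim_{\mtn}(y_{2},\conf_{2})$, their $\sim_{\mtn}$-classes agree, and so $(\conf_{1})_{a\mtn}(b)=(\conf_{2})_{a\mtn}(b)$. As $b$ was arbitrary, the two maps are identical.

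I do not expect a serious obstacle: the equivalence relation $\sim_{\mtn}$ was tailored precisely so that a location and its image under the displacement relating two configurations are identified, and the proof is essentially a direct verification. The only point deserving care is to confirm that $\disp_{21}$ is an admissible witness for $\sim_{\mtn}$ at the pair $(y_{1},\conf_{1}),(y_{2},\conf_{2})$ --- that is, $y_{1}\in\image\conf_{1}$, $y_{2}\in\image\conf_{2}$, and $y_{2}=\disp_{21}(y_{1})$ --- all of which follow automatically from $y_{i}=\conf_{i}(b)$ together with $\image\conf_{2}=\disp_{21}(\image\conf_{1})$. One could remark, as a sanity check, that the well-definedness of $\sim_{\mtn}$ established earlier guarantees the conclusion is independent of which $\disp_{21}$ one happens to pick when the displacement is not unique.
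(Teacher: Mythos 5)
Your proof is correct and follows essentially the same route as the paper's: fix the displacement $\disp_{21}$ relating the two compatible configurations, evaluate both incompatible factors at an arbitrary point, and observe that the resulting pairs are identified by the relation $\sim_{\mtn}$ precisely because $y_{2}=\disp_{21}(y_{1})$. The only difference is cosmetic --- you add the (welcome) sanity checks that the codomains agree and that the witness $\disp_{21}$ is admissible, which the paper leaves implicit.
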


\begin{proof}
Let $X\in\body$ and $\conf'\sim\conf$. We have to show that $\conf'_{a\mtn}(X)=\conf{}_{a\mtn}(X)$.
Since $\conf'\sim\conf$, there is a diffeomorphism, $\disp\in\diffs$,
such that $\conf'=\disp\comp\conf$. Hence, $\conf'(X)=\disp(\conf(X))$.
By the definition (\ref{eq:Eq-Rel-e}), 
\begin{equation}
(\conf'(X),\conf')\sim_{\mtn}(\conf(X),\conf).
\end{equation}
The definition of $\pi_{\mtn\conf}$ in (\ref{eq:pi_ek}), implies
now that 
\begin{equation}
\pi_{\mtn\conf'}(\conf'(X))=\pi_{\mtn\conf}(\conf(X)).
\end{equation}
\end{proof}
When $\body$ has an oriented manifold structure, and $\conf$ is
an oriented embedding, $\conf_{a\mtn}$ is a diffeomorphism. Thus,
in such a case, one can identify $\body$ with $\body_{\mtn}$, \ie, 
$\conf_{a\mtn}$ reduces to an identity. In general, $\body$ has
no structure, and compatibility of $\conf_{a\mtn}:\body\to\body_{\mtn}$
in the standard sense of continuum mechanics cannot even be defined.
Intermediate situations, where the incompatibility of $\conf_{a\mtn}$
is significant and well defined, are considered below.

\subsection{\label{subsec:Reference-configurations}Reference configurations}

A right inverse of $\pi_{\mtsp}$,
\begin{equation}
r:\mtsp\to\csp,
\end{equation}
may be interpreted as a system of reference configurations for the
embodiments of $\body$. In other words, $r(\mtn)$ is the reference
configuration for the material structure $\mtn$ in space.

When a system of reference configurations is given, one may accept
the identification $\body_{\mtn}:=\image r(\mtn)$. In such a situation,
for some $\conf\in\csp$, the deformation $\conf_{\mtn}$ may be identified
with the restriction, 
\begin{equation}
\disp_{\conf,r(\mtn)}\resto{\image r(\mtn)}:\image r(\mtn)\tto\spc,
\end{equation}
 of the diffeomorphism $\disp_{\conf,r(\mtn)}$ of $\spc$ to $\image r(\mtn)$.
The mapping $\conf_{a\mtn}$ is identified in this case with $r(\mtn)$.
 In case a system of reference configurations is not given, the term
``intermediate'' \emph{configuration} does not describe the situation
appropriately because $\conf_{a\mtn}$ is valued in the abstract manifold
$\body_{\mtn}$ and not in space.

Also, although commonly used as a terminology in modeling plasticity,
an \textquotedblleft intermediate\textquotedblright{} configuration
intended as a \emph{global} configuration obtained by rearranging
in incompatible way the material texture, is in general not available.
In analyzing strain, we essentially have a local description of the
incompatibility due to the rearrangement of the material structure.
We locally map the tangent space at a point in some configuration
into an \textquotedblleft intermediate\textquotedblright{} space,
and the mapping is incompatible in the sense of being not congruent.
This circumstance leads us to model incompatibility of tangent-plane-neighborhoods,
as we do in the Section \ref{sec:Infinitesimal-Incompatibility}.

\section{\label{sec:Groupoids}The Groupoid Point of View}

This section describes how some of the foregoing structure can be
described and generalized using the language of groupoid theory. It
is of formal nature and may be skipped without interrupting the reading
of the following sections. Roughly speaking, a groupoid consists of
a collection of elements and a collection of arrows between pairs
of elements. In particular, not all pairs of elements are connected
by an arrow. Arrows can be composed and inverted in a consistent way.

In our situation, we have a set $\csp$, and a set $\gG$ containing
mappings. The configurations in $\csp$ are referred to as \emph{objects}
in the language of groupoid theory and the elements of $\gG$ are
referred to as \emph{morphisms}. Each morphism, $\ggm$, is associated
with a configuration $\conf_{1}=\ga(\ggm)$ and a configuration $\conf_{2}=\gb(\ggm)$,
and $\ggm$ represents a mapping (a restriction of a diffeomorphism
of $\spc$ in the case considered above) $\image\conf_{1}\to\image\conf_{2}$.
In such a case, we write $\conf_{2}=\ggm\conf_{1}$.

The mappings $\ga:\gG\to\csp$ and $\gb:\gG\to\csp$ are referred
to as the \emph{source map} and \emph{target map}, respectively. Note
that here, we do not require that the diffeomorphism be extended to
a diffeomorphism of $\spc$.

It is emphasized that not any pair of elements of $\csp$ are the
source and target of some morphism. In general, there are pairs $\conf_{1},\conf_{2}\in\csp$,
representing incompatible configurations of the body, for which there
is no connecting morphism. For the case where there is a morphism
$\ggm$ such that $\conf_{2}=\ggm\conf_{1}$, we have written $\conf_{2}\sim\conf_{1}$.

The morphisms satisfy the following properties.
\begin{enumerate}
\item $\ga$ and $\gb$ are surjective.
\item For \emph{composable} morphisms $\ggm_{1},\ggm_{2}\in\gG$, that is
$\gb(\ggm_{1})=\ga(\ggm_{2})$, there is a composition $\ggm_{2}\cdot\ggm_{1}\in\gG$
such that 
\begin{equation}
\ga(\ggm_{2}\cdot\ggm_{1})=\ga(\ggm_{1}),\qquad\gb(\ggm_{2}\cdot\ggm_{1})=\gb(\ggm_{2}).
\end{equation}
\item The composition is associative, so for three composable morphisms
\begin{equation}
\ggm_{3}\cdot(\ggm_{2}\cdot\ggm_{1})=(\ggm_{3}\cdot\ggm_{2})\cdot\ggm_{1}.
\end{equation}
\item For each $\conf\in\csp$ there is a morphism $\eps_{\conf}$\textemdash corresponding
to the identity mapping $\image\conf\to\image\conf$\textemdash such
that $\ga(\eps_{\conf})=\gb(\eps_{\conf})=\conf$, and 
\begin{equation}
\ggm\cdot\eps_{\ga(\ggm)}=\ggm=\eps_{\gb(\ggm)}\cdot\ggm\fall\ggm\in\gG.
\end{equation}
\item For each $\ggm\in\gG$, there is $\ggm^{-1}\in\gG$, in our case the
inverse mapping, such that 
\begin{equation}
\ggm^{-1}\cdot\ggm=\eps_{\ga(\ggm)}.\qquad\ggm\cdot\ggm^{-1}=\eps_{\gb(\ggm)}.
\end{equation}
\end{enumerate}
The restrictions of diffeomorphisms of $\spc$, of the type $\image\conf_{1}\to\image\conf_{2}$,
evidently satisfy these conditions. This implies that $\gG$ is a
\emph{groupoid} over $\csp$ and we express this as $\gG\rightrightarrows\csp$.
(See \cite{DLE21} for the theory of groupoids and some of its applications
to continuum mechanics.)

In the language of groupoid theory, the set
\begin{equation}
\mathcal{O}(\conf):=\gb(\ga^{-1}\{\conf\})=\ga(\gb^{-1}\{\conf\})\subset\csp
\end{equation}
is referred to as the \emph{orbit} of $\conf$. However, in our notation,
the orbit is simply the equivalence class of $\conf$\textemdash an
embodiment of the body. The quotient space\textemdash the embodiment
space in our application\textemdash is referred to as the \emph{orbit
space}.

\medskip{}

Another groupoid structure corresponds to the construction of the
set $\body_{\mtn}$ for some given embodiment $\mtn\in\mtsp$. The
space of objects in this case is $A_{\mtn}$ defined above, so that
an object is represented by $(y,\conf)$, $y\in\image\conf$. A morphism
$\gd$ sends $(y_{1},\conf_{1})$ to $(y_{2},\conf_{2})$, where $\conf_{2}=\ggm\conf_{1}$,
$\ggm\in\gG$. Evidently, given $\conf_{1}$ and $\conf_{2}$, with
$\conf_{2}=\ggm\conf_{1}$, there is only a single $y_{2}\in\image\conf_{2}$
such that $(y_{2},\conf_{2})=\gd(y_{1},\conf_{1})$. The resulting
groupoid will be denoted as $\gG_{\mtn}\rightrightarrows A_{\mtn}$.
Thus, for the case described, 
\begin{equation}
\ga(\gd)=(y_{1},\conf_{1}),\qquad\gb(\gd)=(y_{2},\conf_{2}).
\end{equation}
Using the language of groupoids, a point $x\in\body_{\mtn}$ is an
orbit in $\gG_{\mtn}$ and $\body_{\mtn}$ is the orbit space of $\gG_{\mtn}$.

\section{\label{sec:Infinitesimal-Incompatibility}Infinitesimal Incompatibility}

For the case where $\body$ has an oriented manifold structure, a
natural bundle morphism is associated with $\conf:\body\to\spc$;
it is the tangent map, $T\conf$, from $T\body$ to $T\spc$. When
we aim at describing elastic-plastic phenomena, we need to model incompatibility
of tangent planes\textemdash ``infinitesimal neighborhoods'' of material
points\textemdash that may occur even in the case of smooth placements
of the material points in space. To account for such incompatibility,
we need to extend the view described so far. Specifically, we will
no longer consider $\conf$ as a map from $\body$ to $\spc$, rather
we take $\conf$ itself as a vector bundle morphism from $T\body$
to $T\spc$.

\subsection{Infinitesimal configurations and embodiments}

We specialize the setting of Section \ref{sec:The-Basic-Framework}
by replacing first the protobody general set $\body$ by the tangent
bundle, $T\body$ of an oriented manifold $\body$, where we have
the projection
\begin{equation}
\tau_{\body}:T\body\tto\body.
\end{equation}
The tangent space $T_{X}\body$ at $X\in\body$ represents the ``infinitesimal
neighborhood'' of $X$.

The configuration space $\csp$ is a family of vector bundle morphisms
\begin{equation}
\conf:T\body\tto T\spc.
\end{equation}
For $\conf\in\csp$, 
\begin{equation}
\underline{\conf}:\body\tto\spc
\end{equation}
will denote the corresponding base map. It is assumed that $\ul{\conf}$
is oriented. Incompatibility occurs when $\kappa$ is not the tangent
mapping $T\underline{\conf}$ of some $\underline{\conf}:\body\to\spc$.
It is assumed that for each $\conf\in\csp$, $\image\conf$ is a subbundle
of the restriction of $T\spc$ to $\image\ul{\conf}$. Thus, for each
$\conf\in\csp$, $\image\conf$ has a structure of a vector bundle
with projection
\begin{equation}
\vbp_{\conf}:\image\conf\tto\image\ul{\conf}.
\end{equation}
(Note that we use the notation $\conf$ for the vector bundle morphism
rather the traditional $F$ in order to emphasize the analogy with
the general case described above.)

Consider, 
\begin{equation}
\underline{\csp}:=\{\underline{\conf}\mid\conf\in\csp\},
\end{equation}
the set of all base mappings corresponding to the vector bundle morphisms
in $\csp$. We have a natural projection
\begin{equation}
B:\csp\tto\ul{\csp}\qquad\conf\lmt\ul{\conf}.
\end{equation}
The compatibility relation $\sim$ is now redefined as follows. The
configurations $\conf_{1}$ and $\conf_{2}$ are compatible, that
is $\conf_{2}\sim\conf_{1}$ if there exists some diffeomorphism,
$\disp_{21}\in\diffs$ of $\spc$, the tangent map, $T\disp_{21}:T\spc\to T\spc$,
of which satisfies
\begin{equation}
\conf_{2}=T\disp_{21}\comp\conf_{1}.
\end{equation}
Clearly, compatibility is an equivalence relation. Note that the collection,
$H$, of mappings $T\spc\to T\spc$ that are of the form $h=T\disp$,
where $\disp$ is a diffeomorphism of $\spc$, is a subgroup of the
group of all diffeomorphisms of $T\spc$.

On $\underline{\csp}$ we can apply the construction described in
Section \ref{sec:The-Basic-Framework}, and define the equivalence
relation 
\begin{equation}
\ul{\conf}_{1}\,\,\underline{\sim}\,\,\,\ul{\conf}_{2}\qquad\text{if}\qquad\conf_{2}=\disp_{21}\comp\conf_{1},
\end{equation}
for some $\disp_{21}\in\diffs$. Evidently, 
\begin{equation}
\ul{\conf}_{1}=B(\conf_{1})\,\,\underline{\sim}\,\,\,\ul{\conf}_{2}=B(\conf_{2}),\qquad\text{if}\qquad\conf_{2}\sim\conf_{1}.\label{eq:comp-goes-to-base-maps}
\end{equation}
The converse is false in general. For two distinct infinitesimal configurations
such that the images of the base mappings are compatible, the infinitesimal
structures need not be compatible. Once again we define the space
of material structures, or the embodiment space, $\mtsp$, to be the
quotient space $\csp/\sim$, and we have the natural projection
\begin{equation}
\pi_{\mtsp}:\csp\tto\mtsp=\csp/\sim.
\end{equation}
By our construction, the vector bundles of the form $\image\conf$
for the various elements $\conf\in\mtn\in\mtsp$ are all vector bundle
diffeomorphic. That is, if $\conf_{2}\sim\conf_{1}$, then,
\begin{equation}
T\disp_{21}\resto{\image\conf_{1}}:\image\conf_{1}\tto\image\conf_{2},
\end{equation}
is a diffeomorphism of vector bundles.

In accordance with the previous section, we write 
\begin{equation}
\ul{\pi}_{\ul{\mtsp}}:\ul{\csp}\tto\ul{\mtsp}:=\ul{\csp}/\ul{\sim},
\end{equation}
for the natural projection induced by the equivalence relation $\ul{\sim}$.

Let $\mtn\in\mtsp$ be represented by $\conf$, and let $\ul{\mtn}=[\ul{\conf}=B(\conf)]\in\ul{\mtsp}$.
It follows from Equation (\ref{eq:comp-goes-to-base-maps}) that $\ul{\mtn}$
is independent of the particular representative $\conf\in\mtn$. Hence,
we have a surjection 
\begin{equation}
B_{\sim}:\mtsp\tto\ul{\mtsp},\qquad[\conf]\lmt[B(\conf)].
\end{equation}
Thus, $B_{\sim}^{-1}(\ul{\mtn})$ is the collection of infinitesimal
material structures for which the base material structure is $\ul{\mtn}$.

\subsection{The structure corresponding to an infinitesimal embodiment}

As in Equation (\ref{eq:disjoint_union}), $A_{\mtn}$, $\mtn\in\mtsp$
is defined as the disjoint union of the images of all $\conf\in\mtn$.
In analogy with (\ref{eq:Eq-Rel-e}), we define the equivalence relation
$\sim_{\mtn}$ in $A_{\mtn}$ by 
\begin{equation}
a_{1}=(v_{1},\conf_{1})\sim_{\mtn}a_{2}=(v_{2},\conf_{2})\qquad\text{if}\qquad v_{2}=T\disp_{21}(v_{1})\label{eq:rel-on-disjoint-union}
\end{equation}
for $\disp_{21}\in\diffs$ satisfying $\conf_{2}=T\disp_{21}\comp\conf_{1}$.
In accordance with our notation scheme, we have
\begin{equation}
\vb_{\mtn}:=A_{\mtn}/\sim_{\mtn},\qquad\pi_{\mtn}:A_{\mtn}\tto\vb_{\mtn}.
\end{equation}

Evidently, if $a_{1}\sim_{\mtn}a_{2}$, as above,
\begin{equation}
\vbp_{\conf_{2}}(v_{2})=\disp_{21}(\vbp_{\conf_{1}}(v_{1})).\label{eq:rel-projected-on-base}
\end{equation}
Let $\ul{\mtn}\in\ul{\mtsp}$, and 
\begin{equation}
\ul A_{\ul e}:=\coprod_{\ul{\conf}\in\ul{\mtn}}\image\ul{\conf}.
\end{equation}
On $\ul A_{\ul{\mtn}}$ we have the equivalence relation
\begin{equation}
\ul a_{1}=(y_{1},\ul{\conf}_{1})\,\,\,\ul{\sim}_{\ul{\,\mtn}}\,\,\,\ul a_{2}=(y_{2},\ul{\conf}_{2})\qquad\text{if}\qquad y_{2}=\disp_{21}(y_{1}),
\end{equation}
for some $\disp_{21}\in\diffs$ satisfying $\ul{\conf}_{2}=\disp_{21}\comp\ul{\conf}_{1}$.
We set 
\begin{equation}
\ul{\body}_{\ul{\mtn}}=\ul A_{\ul e}/\,\ul{\sim}_{\ul{\,\mtn}},\qquad\ul{\pi}_{\ul{\mtn}}:\ul A_{\ul{\mtn}}\tto\ul{\body}_{\ul{\mtn}}.\label{eq:proj_e_base}
\end{equation}
From (\ref{eq:rel-projected-on-base}) it follows that 
\begin{equation}
(v_{1},\conf_{1})\sim_{\mtn}(v_{2},\conf_{2})\qquad\text{implies}\qquad(\vbp_{\conf_{1}}(v_{1}),\ul{\conf}_{1})\,\,\,\ul{\sim}_{\ul{\,\mtn}}\,\,\,(\vbp_{\conf_{2}}(v_{2}),\ul{\conf}_{2}).\label{eq:rel-proj-base-1}
\end{equation}

We consider the quotient space, $\vb_{\mtn}:=A_{\mtn}/\sim_{\mtn}$,
the structure of which is described below. We will show that $\vb_{\mtn}$
is a vector bundle over $\ul{\body}_{\ul{\mtn}}$. The fiber over
$x\in\ul{\body}_{\ul{\mtn}}$ represents the infinitesimal material
structure at $x$.

For $u=[(v,\conf)]_{\mtn}\in\vb_{\mtn}$ ($[\cdot]_{\mtn}$ indicates
the equivalence class relative to $\sim_{\mtn}$), we set
\begin{equation}
\pi_{\vb_{\mtn}}(u):=[(\vbp_{\conf}(v),B(\conf)]_{\ul{\mtn}}=[(\vbp_{\conf}(v),\ul{\conf}]_{\ul{\mtn}}\in\ul{\body}_{\ul{\mtn}}.
\end{equation}
By (\ref{eq:rel-proj-base-1}), $\pi_{\vb_{\mtn}}(u)$ is independent
of the representative $(v,\conf)\in A_{\mtn}$, so we have a projection
\begin{equation}
\pi_{\vb_{\mtn}}:\vb_{\mtn}\tto\ul{\body}_{\ul{\mtn}}.
\end{equation}

Let $\mtn\in\mtsp$ be an embodiment, and let $\conf\in\mtn$. We
have a natural mapping 
\begin{equation}
\pi_{\mtn\conf}:\image\conf\tto\vb_{\mtn},\qquad v\lmt[(v,\conf)]_{\mtn}.\label{eq:pi_ek-1}
\end{equation}
The mapping $\pi_{\mtn\conf}$ is a vector bundle diffeomorphism.

Similarly, let $\ul{\mtn}\in\ul{\mtsp}$, and let $\ul{\conf}\in\ul{\mtn}$.
We have a natural diffeomorphism,
\begin{equation}
\ul{\pi}_{\ul{\mtn}\ul{\conf}}:\image\ul{\conf}\tto\ul{\body}_{\ul{\mtn}},\qquad y\lmt[(y,\ul{\conf})]_{\ul{\mtn}}\label{eq:pi_ek-2}
\end{equation}
as in the previous section.

The induced decomposition is 
\begin{gather}
\ul{\conf}=\ul{\conf}_{\ul e}\comp\ul{\conf}_{a\ul e},\qquad\ul{\conf}_{\ul{\mtn}}:=\incl_{\ul{\conf}}\comp\ul{\pi}_{\ul{\mtn}\ul{\conf}}^{-1},\\
\conf=\conf_{\mtn}\comp\conf_{a\mtn},\qquad\conf_{a\mtn}:=\pi_{\mtn\conf}\comp\conf,\qquad\conf_{e}:=\incl_{\conf}\comp\pi_{\mtn\conf}^{-1}.
\end{gather}
In fact, $\pi_{\vb_{\mtn}}:\vb_{\mtn}\tto\ul{\body}_{\ul{\mtn}}$
is a vector bundle that is the pullback of $\vbp_{\conf}:\image\conf\to\image\ul{\conf}$
by $\ul{\conf}_{\ul{\mtn}}.$ The resulting structure is illustrated
in the following commutative diagram.

\begin{equation}
\xymatrix{ & \text{} &  & \image\conf\subset T\spc\ar[dl]\sp(0.5){\pi_{\mtn\conf}}\ar[ddd]^{\vbp_{\conf}}\\
T\body\ar[rr]_{\conf_{a\mtn}}\ar[rrru]\sp(0.4){\conf}\ar[ddd]_{\tau_{\body}} &  & \vb_{\mtn}\ar@/_{2pc}/[ur]_{\conf_{\mtn}}\ar[ddd]^{\pi_{\vb_{\mtn}}}\\
\\
 &  &  & \image\ul{\conf}\subset\spc\ar[dl]\sp(0.5){\ul{\pi}_{\ul{\mtn}\ul{\conf}}}\\
\body\ar[rr]_{\ul{\conf}_{\ul a\ul{\mtn}}}\ar[rrru]\sp(0.4){\ul{\conf}} &  & \ul{\body}_{\ul{\mtn}}\ar@/_{2pc}/[ur]_{\ul{\conf}_{\ul{\mtn}}}
}
\label{eq:commutative-a}
\end{equation}

In case a system of reference configurations $r:\mtsp\to T\spc$,
a right inverse of $\pi_{\mtsp}$ is given, the comments made in Section
\ref{subsec:Reference-configurations}, still apply. Looking at the
decomposition $\conf=\conf_{\mtn}\comp\conf_{a\mtn}$, we recover
the standard multiplicative decomposition of the deformation gradient
$F$, namely $F=F^{e}F^{p}$.

\section{\label{sec:Dislocated}Deformations of Dislocated Crystals}

The description of dislocated (periodic) crystals falls within the
scheme built up so far. As above, the space manifold, $\spc$, is
an oriented $n$-dimensional manifold and the protobody $T\body$
is the tangent bundle of an oriented, compact, $n$-dimensional manifold
with boundary. A configuration $\conf\in\csp$ is assumed once again
to be a vector bundle morphism
\begin{equation}
\conf:T\body\tto T\spc,
\end{equation}
such that the base mapping, $\ul{\conf}:\body\to\spc$, is an oriented
embedding, and for each $X\in\body$,
\begin{equation}
\conf\resto{T_{X}\body}:T_{X}\body\tto T_{X}\spc
\end{equation}
is an orientation preserving isomorphism.

The tangent bundle, $T\body$, is viewed as the perfect crystal lattice.
Specifically, as a possible interpretation we can say that $\body$
is the set of atoms, itself the lattice structure, while considering
$T\body$ allows us to assign at each point the pertinent optical
axes. See also \cite{Davini1986} where frames at the various material
points represent the crystalline structure.  The fact that $\conf$
need not be $T\ul{\conf}$ reflects the dislocated configuration.

We set $\conf_{1}\sim\conf_{2}$ if there is a diffeomorphism $\disp_{21}:\spc\to\spc$
such that $\conf_{2}=T\disp_{21}\comp\conf_{1}$. Evidently, if $\conf_{1}\sim\conf_{2}=T\disp_{21}\comp\conf_{1}$,
then, $\ul{\conf}_{1}\,\,\ul{\sim}\,\:\ul{\conf}_{2}$. Here, again,
the equivalence relation $\ul{\conf}_{1}\,\,\ul{\sim}\,\:\ul{\conf}_{2}$
is defined by the requirement that there is some diffeomorphism $\disp_{21}$
of $\spc$, such that $\ul{\conf}_{2}=\disp_{21}\comp\ul{\conf}_{1}$.
The spaces $\mtsp$ and $\ul{\mtsp}$ are defined in the previous
section.

Let $\conf_{1},\,\conf_{2}\in\csp$ be arbitrary (not necessarily
related). Then, since both $\ul{\conf}_{1}$ and $\ul{\conf}_{2}$
are embeddings, letting $\ul{\conf}_{1}^{-1}:\image\ul{\conf}_{1}\to\body$
be the right inverse, we have a diffeomorphism 
\begin{equation}
\ul{\conf}_{2}\comp\ul{\conf}_{1}^{-1}:\image\ul{\conf}_{1}\tto\image\ul{\conf}_{2}.
\end{equation}
Since this diffeomorphism may be extended to a diffeomorphism $\disp_{21}$
of $\spc$, all elements $\ul{\conf}\in\ul{\csp}$ are related. This
implies that all $\ul{\conf}\in\ul{\csp}$ share the same embodiment
$\ul{\mtn}=[\ul{\conf}]\in\ul{\mtsp}$, so that $\ul{\mtsp}=\{\ul{\mtn}\}$.

 Moreover, as for any $\conf\in\csp$, $\image\ul{\conf}$ is diffeomorphic
with $\body$ and diffeomorphic with the single $\ul{\body}_{\ul{\mtn}}$
, we may naturally identify $\ul{\body}_{\ul{\mtn}}$ with $\body$
so that $\ul{\conf}_{\ul a\ul{\mtn}}$ is the identity. It follows
that for every $\conf$, $\ul{\conf}_{\ul{\mtn}}=\ul{\conf}$, $\ul{\pi}_{\ul{\mtn}\ul{\conf}}=\ul{\conf}^{-1}:\image\ul{\conf}\to\body$,
and Diagram (\ref{eq:commutative-a}) reduces to
\begin{equation}
\xymatrix{ & \text{} &  & \image\conf\subset T\spc\ar[dl]\sp(0.5){\pi_{\mtn\conf}}\ar[ddd]^{\vbp_{\conf}}\\
T\body\ar[rr]\sp(0.6){\conf_{a\mtn}}\ar[rrru]\sp(0.4){\conf}\ar[dddr]_{\tau_{\body}} &  & \vb_{\mtn}\ar@/_{2pc}/[ur]_{\conf_{\mtn}}\ar[dddl]^{\pi_{\vb_{\mtn}}}\\
\\
 &  &  & \image\ul{\conf}\subset\spc\\
 & \body=\ul{\body}_{\ul{\mtn}}\ar[rru]\sp(0.4){\ul{\conf}}
}
\label{eq:commutative-b}
\end{equation}

As mentioned above, the presence of dislocations, or incompatibility,
is reflected by the fact that $\conf$ is different from $T\ul{\conf}$.
\begin{lem}
Let $\conf_{1},\,\conf_{2}\in\csp$. Then, $\conf_{1}\sim\conf_{2}$
if and only if
\begin{equation}
(T\ul{\conf}_{1})^{-1}\comp\conf_{1}=(T\ul{\conf}_{2})^{-1}\comp\conf_{2},\label{eq:condition1}
\end{equation}
where each side of the equation is a vector bundle morphism $T\body\to T\body$
over the identity, and left inverses of the tangent mappings are well-defined
on the images of the configurations.
\end{lem}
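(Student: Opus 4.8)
The plan is to show that the vector bundle endomorphism
\[
\Theta_{i}:=(T\ul{\conf}_{i})^{-1}\comp\conf_{i}:T\body\tto T\body,
\]
attached to a configuration $\conf_{i}$, is a complete invariant of its compatibility class; the asserted equivalence is then precisely the statement that $\conf_{1}\sim\conf_{2}$ holds if and only if $\Theta_{1}=\Theta_{2}$. First I would record the objects carefully. Since $\ul{\conf}_{i}$ is an oriented embedding of the $n$-manifold $\body$ into the $n$-manifold $\spc$, the tangent map $T\ul{\conf}_{i}$ restricts on each fiber to a linear isomorphism $T_{X}\body\to T_{\ul{\conf}_{i}(X)}\spc$, so its inverse is defined on all of $T\spc\resto{\image\ul{\conf}_{i}}$, and in particular on $\image\conf_{i}$, which is contained there. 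Consequently $\Theta_{i}$ is a well-defined vector bundle morphism over $\idnt_{\body}$ and is fiberwise an automorphism of $T_{X}\body$. This justifies the parenthetical claims in the statement about domains and left inverses.

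For the forward implication I would assume $\conf_{1}\sim\conf_{2}$, so that $\conf_{2}=T\disp_{21}\comp\conf_{1}$ for some $\disp_{21}\in\diffs$. Passing to base maps gives $\ul{\conf}_{2}=\disp_{21}\comp\ul{\conf}_{1}$, and functoriality of the tangent functor yields $T\ul{\conf}_{2}=T\disp_{21}\comp T\ul{\conf}_{1}$, hence $(T\ul{\conf}_{2})^{-1}=(T\ul{\conf}_{1})^{-1}\comp(T\disp_{21})^{-1}$ as fiberwise inverses. Substituting into $\Theta_{2}$ and cancelling $(T\disp_{21})^{-1}\comp T\disp_{21}$ gives $\Theta_{2}=(T\ul{\conf}_{1})^{-1}\comp\conf_{1}=\Theta_{1}$. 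This direction is a direct computation once the chain rule for tangent maps is in place.

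For the converse I would first produce the candidate diffeomorphism. Because both $\ul{\conf}_{1},\ul{\conf}_{2}$ are embeddings, the map $\ul{\conf}_{2}\comp\ul{\conf}_{1}^{-1}:\image\ul{\conf}_{1}\to\image\ul{\conf}_{2}$ is an (orientation-preserving) diffeomorphism which, exactly as in the paragraph preceding the lemma, extends to a diffeomorphism $\disp_{21}$ of $\spc$; by construction $\ul{\conf}_{2}=\disp_{21}\comp\ul{\conf}_{1}$. I would then verify $\conf_{2}=T\disp_{21}\comp\conf_{1}$ fiberwise, working at a fixed $X\in\body$ so as to sidestep any worry about non-uniqueness of the extension. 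The chain rule gives $T_{X}\ul{\conf}_{2}=T_{\ul{\conf}_{1}(X)}\disp_{21}\comp T_{X}\ul{\conf}_{1}$; reading the hypothesis $\Theta_{1}=\Theta_{2}$ fiberwise as $(T_{X}\ul{\conf}_{1})^{-1}\comp\conf_{1}\resto{T_{X}\body}=(T_{X}\ul{\conf}_{2})^{-1}\comp\conf_{2}\resto{T_{X}\body}$, solving for $\conf_{2}\resto{T_{X}\body}$ and inserting the chain rule collapses the composite to exactly $T_{\ul{\conf}_{1}(X)}\disp_{21}\comp\conf_{1}\resto{T_{X}\body}$. Hence $\conf_{2}=T\disp_{21}\comp\conf_{1}$ and $\conf_{1}\sim\conf_{2}$.

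The content is elementary; the only place demanding care is the bookkeeping of domains and codomains of the several inverses, in particular ensuring $(T\ul{\conf}_{i})^{-1}$ is applied only where it is defined. I expect the most delicate point to be the subtlety that the tangent map $T\disp_{21}$ at boundary points of $\image\ul{\conf}_{1}$ is not determined by $\disp_{21}\resto{\image\ul{\conf}_{1}}$ alone; this is defused by performing the verification fiberwise, since it uses only the chain rule applied to the genuine identity $\ul{\conf}_{2}=\disp_{21}\comp\ul{\conf}_{1}$ along $\image\ul{\conf}_{1}$, never the transverse behavior of the extension. The existence of the global extension $\disp_{21}$, which would otherwise be the substantive step, is already supplied in the text immediately above the lemma.
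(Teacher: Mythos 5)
Your argument is correct and follows essentially the same route as the paper's: the forward direction is the same chain-rule computation $(T(\disp_{21}\comp\ul{\conf}_{1}))^{-1}\comp T\disp_{21}\comp\conf_{1}=(T\ul{\conf}_{1})^{-1}\comp\conf_{1}$, and the converse likewise solves for $\conf_{2}=T(\ul{\conf}_{2}\comp\ul{\conf}_{1}^{-1})\comp\conf_{1}$ and invokes the extension of $\ul{\conf}_{2}\comp\ul{\conf}_{1}^{-1}$ to a diffeomorphism of $\spc$ supplied in the paragraph preceding the lemma. Your extra fiberwise bookkeeping and the remark about the behavior of the extension transverse to $\image\ul{\conf}_{1}$ are harmless refinements of the same proof.
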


\begin{proof}
Assume that $\conf_{1}\sim\conf_{2}$. Then, there is a diffeomorphism
$\disp_{21}:\spc\to\spc$ such that $\ul{\conf}_{2}=\disp_{21}\comp\ul{\conf}_{1}$
and $\conf_{2}=T\disp_{21}\comp\conf_{1}$. Hence, 
\begin{equation}
\begin{split}(T\ul{\conf}_{2})^{-1}\comp\conf_{2} & =(T(\disp_{21}\comp\ul{\conf}_{1}))^{-1}\comp T\disp_{21}\comp\conf_{1},\\
 & =(T\disp_{21}\comp T\ul{\conf}_{1})^{-1}\comp T\disp_{21}\comp\conf_{1},\\
 & =(T\ul{\conf}_{1})^{-1}\comp\conf_{1},\\
 & =T\ul{\conf}_{1}^{-1}\comp\conf_{1}.
\end{split}
\end{equation}
Conversely, assume that condition (\ref{eq:condition1}) holds. Then,
\begin{equation}
\begin{split}\conf_{2} & =T\ul{\conf}_{2}\comp(T\ul{\conf}_{1})^{-1}\comp\conf_{1},\\
 & =T\ul{\conf}_{2}\comp T\ul{\conf}_{1}^{-1}\comp\conf_{1},\\
 & =T(\ul{\conf}_{2}\comp\ul{\conf}_{1}^{-1})\comp\conf_{1}.
\end{split}
\end{equation}
As mentioned above, $\ul{\conf}_{1}\,\,\ul{\sim}\,\,\ul{\conf}_{2}$
always, and so, there is an extending diffeomorphism $\disp_{21}:\spc\to\spc$
such that $\ul{\conf}_{2}\comp\ul{\conf}_{1}^{-1}$ is the restriction
of $\disp_{21}$ to $\image\ul{\conf}_{1}$ as in the following diagram
\begin{equation}
\xymatrix{ & \text{} &  & T\body\ar[llld]\sb(0.5){\conf_{1}}\ar[dl]\sp(0.5){\conf_{2}}\ar[ddd]^{\tau_{\body}}\\
\image\conf_{1}\ar[rr]_{T(\ul{\conf}_{2}\comp\ul{\conf}_{1}^{-1})}\ar[ddd]_{\tau_{\spc}} &  & \image\conf_{2}\ar[ddd]^{\tau_{\spc}}\\
\\
 &  &  & \body\ar[llld]\sb(0.5){\ul{\conf}_{1}}\ar[dl]\sp(0.5){\ul{\conf}_{2}}\\
\image\ul{\conf}_{1}\ar[rr]_{\ul{\conf}_{2}\comp\ul{\conf}_{1}^{-1}} &  & \image\ul{\conf}_{2}
}
\end{equation}
It follows that $\conf_{2}=T\disp_{21}\comp\conf_{1}$.
\end{proof}
We conclude that for any embodiment $\mtn=[\conf]$, there is a unique
oriented vector bundle isomorphism
\begin{equation}
F_{a\mtn}:T\body\tto T\body,
\end{equation}
over the identity of $\body$. For any $\conf\in\mtn$, $F_{a\mtn}$
satisfies
\begin{equation}
F_{a\mtn}=(T\ul{\conf})^{-1}\comp\conf,
\end{equation}
and this definition is independent of the choice of $\conf$.

Consequently,

\begin{myprop}

For a dislocated crystal, the embodiment space may be identified with
the group of oriented vector isomorphisms $T\body\to T\body$, over
the identity.

\end{myprop}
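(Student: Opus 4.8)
The plan is to exhibit an explicit bijection between the embodiment space $\mtsp$ and the group $\mathcal{G}$, under composition, of oriented vector bundle isomorphisms $T\body\to T\body$ covering the identity $\idnt_{\body}$, and then to carry the group structure of $\mathcal{G}$ across it. The first step is to define
\begin{equation}
\Phi:\mtsp\tto\mathcal{G},\qquad\mtn\lmt F_{a\mtn}.
\end{equation}
That $\Phi$ is well defined is precisely the content of the preceding Lemma together with the remark following it: for any representative $\conf\in\mtn$ one has $F_{a\mtn}=(T\ul{\conf})^{-1}\comp\conf$, this morphism is an oriented vector bundle automorphism of $T\body$ over the identity, and it does not depend on the chosen $\conf$.

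Injectivity of $\Phi$ will follow again directly from the Lemma. If $\Phi(\mtn_1)=\Phi(\mtn_2)$, one picks representatives $\conf_1\in\mtn_1$, $\conf_2\in\mtn_2$ and reads off $(T\ul{\conf}_1)^{-1}\comp\conf_1=(T\ul{\conf}_2)^{-1}\comp\conf_2$, which by the equivalence (\ref{eq:condition1}) forces $\conf_1\sim\conf_2$, hence $\mtn_1=\mtn_2$.

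For surjectivity the idea is to fix once and for all a base configuration $\conf_0\in\csp$ (we take $\csp\neq\empt$ as a standing assumption, i.e.\ at least one oriented embedding $\ul{\conf}_0:\body\to\spc$ is available), and, given an arbitrary $F\in\mathcal{G}$, to set
\begin{equation}
\conf:=T\ul{\conf}_0\comp F.
\end{equation}
Since $F$ covers $\idnt_{\body}$, the base map of $\conf$ is $\ul{\conf}_0$, an oriented embedding; and over each $X\in\body$ the restriction $\conf\resto{T_X\body}$ is the composite of the two orientation preserving isomorphisms $F\resto{T_X\body}$ and $(T\ul{\conf}_0)\resto{T_X\body}$, hence itself an orientation preserving isomorphism. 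Thus $\conf\in\csp$, and
\begin{equation}
F_{a[\conf]}=(T\ul{\conf}_0)^{-1}\comp\conf=(T\ul{\conf}_0)^{-1}\comp T\ul{\conf}_0\comp F=F,
\end{equation}
so that $\Phi([\conf])=F$.

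The last step is to observe that $\mathcal{G}$ is indeed a group under composition (the identity, composites and inverses of oriented vector bundle automorphisms of $T\body$ over $\idnt_{\body}$ are again of this kind), so the bijection $\Phi$ transports this structure to $\mtsp$ via $\mtn_1\cdot\mtn_2:=\Phi^{-1}(F_{a\mtn_1}\comp F_{a\mtn_2})$, completing the identification. I expect the only step with genuine content to be surjectivity, and even there the work reduces to verifying that the candidate $\conf=T\ul{\conf}_0\comp F$ really lies in $\csp$ (the orientation and fiberwise-isomorphism conditions) and that $\csp$ is nonempty to begin with; well-definedness and injectivity are nothing more than a rereading of the Lemma.
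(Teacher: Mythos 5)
Your proof is correct and takes essentially the same route as the paper: the identification is $\mtn\mapsto F_{a\mtn}=(T\ul{\conf})^{-1}\comp\conf$, with well-definedness and injectivity read off directly from the preceding Lemma, exactly as the paper intends when it says ``consequently.'' The paper leaves surjectivity implicit, and your construction $\conf:=T\ul{\conf}_{0}\comp F$ supplies that step explicitly and correctly, under the (tacit but necessary) assumption that $\csp$ contains every vector bundle morphism whose base map is an oriented embedding and whose fiberwise restrictions are orientation-preserving isomorphisms.
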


Any such vector bundle isomorphism may be identified with a section
of a principal fiber bundle over $\body$, the fiber at $X\in\body$
of which is $GL(T_{X}\body)^{+}$. Evidently, under a chart, the fiber
may be modeled by $GL(n)^{+}$, which also acts on the fibers (see
\cite[p. 313]{Ster83}). In fact, we obtain the material $G$-structure
of \cite[p. 261]{Eps10}.
\begin{rem}
As mentioned in the introduction, we view the plastic factor, $\conf_{a\mtn}$,
of the decomposition as the vector bundle morphism that maps the perfect
crystal structure to the dislocated one, an incompatible vector bundle
morphism (as it it not the tangent of the base mapping). The plastic
factor is followed by a compatible (the tangent to the base map) vector
bundle morphism $\conf_{e}:\body_{\mtn}\to\spc$. Their composition
gives the incompatible configuration of the protobody in space. For
this remark, let us refer to this point of view as $II$, and write
\begin{equation}
\conf^{II}=\conf_{\mtn}^{II}\comp\conf_{a\mtn}^{II},
\end{equation}
for a compatible $\conf_{\mtn}^{II}$.

This point of view differs from the point of view (\eg, \cite{Lee1967})
where the body is first dissected into small neighborhoods to release
the residual stresses\textemdash an incompatible mapping\textemdash then
packed into the new configuration in space by another incompatible
mapping, so that the composition is a compatible vector bundle morphism
of the body into space. Let us refer to this point of view as $I$
and write 
\begin{equation}
\conf^{I}=\conf_{\mtn}^{I}\comp\conf_{a\mtn}^{I},
\end{equation}
where now $\conf^{I}$ is compatible.

The relation between the two points of view is quite clear. If we
make the identification
\begin{equation}
\conf_{a\mtn}^{I}=(\conf_{a\mtn}^{II})^{-1}:\body_{\mtn}\tto T\body,\qquad\conf_{\mtn}^{I}=\conf^{II}:T\body\tto T\spc,
\end{equation}
we obtain
\begin{equation}
\conf^{I}=\conf^{II}\comp(\conf_{a\mtn}^{II})^{-1},
\end{equation}
as in the following diagram.
\begin{equation}
\xymatrix{ &  & \image\conf\subset T\spc\\
T\body\ar[rr]^{\conf_{a\mtn}^{II}}\ar[rru]^{\conf^{II}=\conf_{\mtn}^{I}} &  & \body_{\mtn}\ar@/^{1pc}/[ll]^{\conf_{a\mtn}^{I}=(\conf_{a\mtn}^{II})^{-1}}\ar[u]_{\conf^{I}=\conf_{\mtn}^{II}}.
}
\end{equation}
While in point of view $I$, the basic object is the ``frustrated''
body $\body_{\mtn}$, for view $II$, which we adopt in this manuscript,
the basic object is $T\body$, interpreted as the perfect crystal.
\end{rem}

\section{\label{sec:Dislocated-Quasicrystals}Dislocated Quasicrystals}

The above construction admits a natural adaptation to the case of
quasicrystals, i.e., those (natural and synthetic) alloys showing
a quasi-periodic distribution of Bragg's peaks under diffraction experiments.
In fact, every $n$-dimensional quasi-periodic lattice can be considered
as the projection of a periodic atomic array in a $2n$-dimensional
space onto a $n$-dimensional incommensurate subspace. For example,
consider a quasi-periodic lattice in the plane and develop the mass
density function in a Fourier series; quasi-periodicity imposes in
the Fourier series a $4$-dimensional wave vector: once again we go
from n to $2n$ \cite{DM96}.

Quasicrystals admit dislocations \cite{WGZU}, \cite{WaD}. Their
Burgers vector admits a component in the incommensurate subspace and
another one in the orthogonal complement to that space in the higher-dimensional
space from which we construct the quasi-periodic lattice \cite{Kle},
\cite{M19}.

To exploit in this case the structure in previous section, we could
consider $\body$ itself as a locally trivial fiber bundle with base
manifold a fit region in $3D$ real space and $\reals^{3}$ as a typical
fiber. The fit region includes the physical atoms constituting the
body, while the fiber at each point includes information on the low-scale
atomic flips that assure quasi-periodicity in the physical space.
Then we consider $T\body$ and act as above, paying attention to the
circumstance that equivalence relations should account for both basis
and fiber of $\body$ at the same time; in essence they can be considered
as those in the previous section when referred to the higher-dimensional
space from which we obtain the quasi-periodic lattice.

\bigskip{}
\noindent \textbf{\textit{Acknowledgments.}} RS's work is partially
supported by the H.~Greenhill Chair for Theoretical and Applied Mechanics,
and by the Pearlstone Center for Aeronautical Engineering Studies
at Ben-Gurion University of the Negev. \\
This work belongs to activities of the research group ``Theoretical
Mechanics'' in the ``Centro di Ricerca Matematica Ennio De Giorgi''
of the Scuola Normale Superiore in Pisa. PMM acknowledges the support
of GNFM-INDAM.\\
DM acknowledges the support of GNAMPA-INDAM.


\end{document}